 \newtheorem{theorem}{Theorem}
 \newtheorem{lemma}{Lemma}
\newcommand{\no}[1]{}
\newcommand{\todo}[1]{} 
\newcommand{\eps}{\varepsilon}
\setlist[itemize]{noitemsep, topsep=0pt} 
\newlist{myenumerate}{enumerate}{9}
\setlist[myenumerate,1]{label=(\arabic*)}
\setlist[myenumerate,2]{label=(\Roman*)}
\setlist[myenumerate,3]{label=(\Alph*)}
\setlist[myenumerate,4]{label=(\roman*)}
\setlist[myenumerate,5]{label=(\alph*)}
\setlist[myenumerate,6]{label=(\arabic*)}
\setlist[myenumerate,7]{label=(\Roman*)}
\setlist[myenumerate,8]{label=(\Alph*)}
\setlist[myenumerate,9]{label=(\roman*)}
\newcommand{\old}{\mathrm{old}}
\newcommand{\new}{\mathrm{new}}
\title{External-memory dictionaries with worst-case update cost} 
\author{
Rathish Das\thanks{University of Liverpool, UK. Supported by the Canada Research Chairs Programme and NSERC Discovery Grants.}
\and
John Iacono\thanks{Universit\'{e} libre de Bruxelles, Belgium and New York University, USA. Supported by Fonds de la Recherche Scientifique-FNRS under Grants MISU F 6001 1 and CDR/OL J.0101.22 --- 40008322.}
\and
Yakov Nekrich\thanks{Michigan Technological University, USA.}
}
\date{}
\begin{document}


\maketitle

 \begin{abstract}
The $B^{\epsilon}$-tree [Brodal and Fagerberg 2003] is a simple I/O-efficient external-memory-model data structure that supports updates orders of magnitude faster than B-tree with a query performance comparable to the B-tree: for any positive constant $\epsilon<1$ insertions and deletions take $O(\frac{1}{B^{1-\epsilon}}\log_{B}N)$ time (rather than $O(\log_BN)$ time for the classic B-tree), queries take $O(\log_BN)$ time and range queries returning $k$ items take $O(\log_BN+\frac{k}{B})$ time.
Although the $B^{\epsilon}$-tree has an optimal update/query tradeoff, the runtimes are amortized.
Another structure, the write-optimized skip list, introduced by Bender et al.~[PODS 2017], has the same performance as the $B^{\epsilon}$-tree but with runtimes that are randomized rather than amortized.
In this paper, we present a variant of the $B^{\epsilon}$-tree with deterministic worst-case running times that are identical to the original's amortized running times.
 \end{abstract}

\section{Introduction}

The external memory model of Aggarwal and Vitter~\cite{AggarwalVi88}, sometimes called the I/O model or the Disk-Access Model (DAM), has been the most successful model of computation for problems where the data can not fit into RAM and the transfers between main memory and external memory strongly dominate the runtime on real machines.
External memory data structures~\cite{BrodalFa03b,BrodalFa02b, Arge03, ArgeBeDe07,buchsbaum2000external, jiang2019faster, ONeilChGa96,BenderFaFi07, SearsCaBr08,BenderFaJo17,DBLP:journals/algorithmica/DemaineIL15, iacono2012using,MunroN19} have played a significant role in improving the performance of applications involving large datasets; the above references are but a small sample of related work in the external memory model, DBLP currently lists \href{https://dblp.org/search?q=external+memory}{over 400 publications} with external memory in the title.



In the external memory model, there are two levels of memory: an internal memory of size $M$ and an external memory of unbounded size connected to the internal memory. Data is transferred between the two levels of memory in contiguous blocks of size $B$. The cost of an algorithm or a data structure is measured by the number of block transfers between internal and external memory; all computation in internal memory is free. When designing an algorithm in the external-memory model, the values of $M$ and $B$ are known, in contrast to the cache-oblivious model~\cite{DBLP:journals/talg/FrigoLPR12}.


The $B$-tree, introduced by Bayer and McCreight fifty years ago~\cite{BayerMc72}, is the archetypal data structure in the external memory model. It stores a set of totally ordered keys, and supports insertions, deletions, predecessor queries in time $O(\log_BN)$, and range searches returning $k$ elements in time $O(\log_BN+\frac{k}{B})$ in the external memory model. The $B$-tree is simple, is a standard part of the CS curriculum, and in optimized form is widely implemented as the cornerstone structure of databases due to its excellent real-world performance. 

In internal memory, updates and searches share the same $O(\log N)$ runtime which is achieved by classic structures such as the AVL tree and the Red-Black tree. However, in external memory it is possible to substantially speed up updates with little increase in the search time.
This trade-off was explored primary from the lower-bound view by Brodal and Fagerberg~\cite{BrodalFa03b}. On one extreme of this trade-off curve, B-trees have an optimal query (e.g., predecessor query) bound but have a slow update bound. On the other extreme of the trade-off curve, buffer-repository trees~\cite{buchsbaum2000external} have far better update bound than B-tree but have a poor query bound. 

Brodal and Fagerberg~\cite{BrodalFa03b} introduced $B^{\epsilon}$-tree at a ``sweet spot'' on the update-query trade-off curve. $B^{\epsilon}$-tree performs insertions orders of magnitude faster than B-tree with a query performance comparable to B-tree. The improvement in insertion performance of $B^{\epsilon}$-tree comes due to the use of buffers at nodes. Instead of inserting an individual key into the tree, the key is queued in the buffer of the root node, and when a significant number of keys are buffered, they are flushed recursively to the next level of the tree. This buffering technique allows $B^{\epsilon}$-tree to achieve an amortized update cost $O(\frac{1}{B^{1-\epsilon}} \log_B N)$, while slowing down the queries by but a $\frac{1}{\epsilon}$ factor ($\epsilon$ is a tuning parameter that is between 0 and 1); they showed this is optimal for a random insertion workload~\cite{BrodalFa03b}. 
Data structures for other fundamental abstract data types (ADTs) in the external memory and cache oblivious models with fast updates have also been created; see for example cache-oblivious dictionaries \cite{DBLP:conf/soda/BrodalDFILM10},
priority queues \cite{DBLP:conf/esa/IaconoJT19,FJKT99,KS96,ABDHM07,BFMZ04,CR18},
hashing (only exact search) \cite{iacono2012using} and
point location \cite{DBLP:conf/isaac/IaconoKK19}.

The $B^{\epsilon}$-tree structure is very simple and easy to implement, but it has one major flaw: the runtimes hold in the amortized sense only and in the worst case the entire structure may need to be modified to execute a single query. Such performance is clearly unacceptable in the target application of a database. As we describe below, the $B^{\epsilon}$-tree is just a $B$-tree with fanout $B^{\epsilon}$, and where there is a buffer on each internal node which caches updates and only distributes them to the children when the buffer is full.
The deterministic worst-case running time guarantee is very poor due to \emph{flushing cascades}. A flushing cascade occurs when flushing the buffer of a node triggers flushes into multiple children nodes, which in turn trigger flushes to their children and so on. 
Flushing cascades become most acute when two nodes are merged into a new node, and their buffers are merged into a single buffer. The resulting buffer can overflow if the merged buffer can not hold all the items of the original two buffers. A buffer overflow can trigger new flushes, which again cause cascades and more node merges in the whole tree. In the worst-case, a single update (insert or delete) could trigger modifications to $\Omega(N^{1-o(1)})$ nodes of the tree~\cite{bender2020flushing}.

There have been several attempts to deterministically de-amortize the performance of $B^{\epsilon}$ trees. In Bender et al.~\cite{BenderFaJo17}, a variant of skip lists was presented where a query takes $O(\log_{B} N)$ I/O with high probability and an update takes $O(\frac{1}{B^{1-\eps}}\log_{B} N )$ I/O amortized with high  probability.
Recently Bender et al.~\cite{bender2020flushing}  introduced a randomized universal buffer flushing strategy that achieves an update bound of $O(\frac{1}{B^{1-\eps}}\log_{B} N )$ with high probability (not amortized) for a variant of the $B^{\epsilon}$-tree. 

So, using the best previously known results, $O(\frac{1}{B^{1-\epsilon}} \log_B N)$ updates and $O(\log_BN)$ queries are possible in either the amortized sense, or randomized with high probability.

\textbf{This paper.}
Our contribution is that these running times can be achieved deterministically in the worst-case (i.e., deamortized) with a structure based on the $B^{\epsilon}$-tree. 

It seems difficult to apply standard de-amortization techniques to our problem without sacrificing performance. 
The standard global re-building
technique~\cite{overmars1987design} works by copying insertions into a new tree using a background process. When all insertions are copied into a new tree, the old tree is discarded.  This approach can guarantee that the height of the tree is bounded by $O(\log_B N)$.
However, we can have a large sequence of leaves with a very small number of insertions in each leaf. This can significantly increase the cost of range queries; in the worst scenario we may have to visit $\Omega(N/B)$ leaf nodes in order to answer a range query.
Another approach is to maintain the sizes of buffers in internal nodes using a background process. However, the merging of two nodes can lead to a cascade of deletions. Thus each round of the background process could take almost linear time as explained above~\cite{bender2020flushing}. 

In this paper we  de-amortize the $B^{\eps}$-tree through a combination of several techniques. At a high level, this involves having large leaves stored in a separate structure, and periodically splitting or merging selected leaves rather than when they reach certain sizes. Splitting and merging needs to be propagated up the structure, and splitting in particular can cause overflow which requires flushing data down the structure. And, all of this needs to be done while new updates keep arriving though a very careful choice of relevant parameters.

What does it mean to deamortize a structure whose per-operation runtime may be subconstant, as it is likely to be in the case of $O(\frac{1}{B^{1-\eps}}\log_{B} N )$? This means that if the update cost of $\frac{1}{B^{1-\eps}}\log_{B} N $ is subconstant, then a constant number of I/Os are executed every $\frac{B^{1-\epsilon}}{\log_B N}$ operations.

We proceed by reviewing the standard $B^\epsilon$-tree in Section~\ref{s:btree} and presenting our variant in Section~\ref{s:ods}.


\section[The B\^epsilon]{The $B^{\epsilon}$-tree}\label{s:btree}
\begin{figure}[!ht]
\centering
\includegraphics[width=0.7\textwidth]{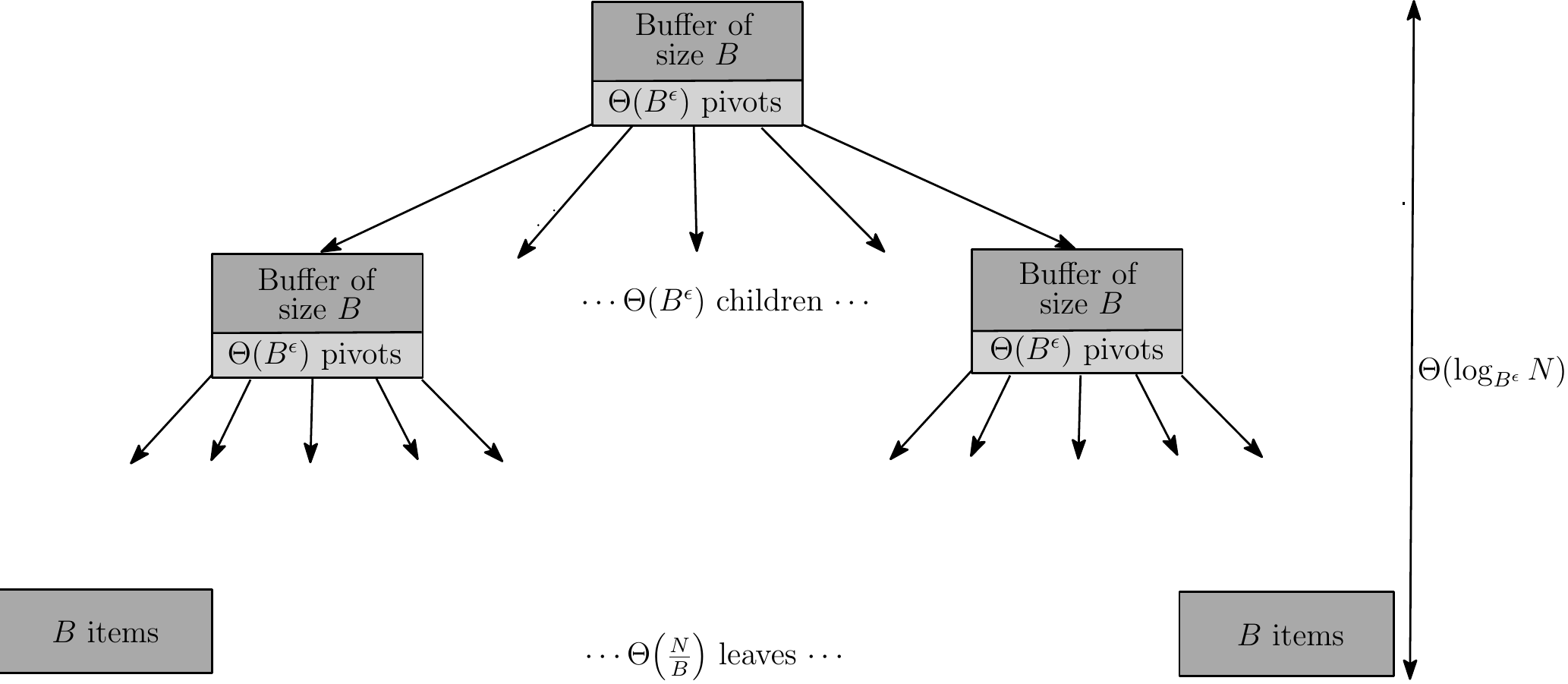}
\caption{$B^{\epsilon}$-tree.}
\label{fig:tree}
\end{figure}

\paragraph*{Abstract data type.}
The data structure maintains a set $S$ of keys (or key-value pairs), where the keys are constant-sized and come from a totally ordered universe, under the following operations:
\begin{itemize}
    \item Insert($x$): Adds key $x$ to $S$ under the precondition that $x \not \in S$.
    \item Delete($x$): Removes key $x$ from $S$ under the precondition that $x \in S$
    \item Predecessor($x$): Returns the largest key $y \in S$ such that $y \leq x$
    \item Range-Report($x,y$): Returns all keys in $[x,y] \cap S$ in sorted order.
\end{itemize}
We use $N$ to denote $|S|$, and use the word \emph{update} to generically refer to an insertion or deletion.


\paragraph*{Structure description.} See Figure~\ref{fig:tree}.
A $B^{\epsilon}$-tree is just a standard $B$-tree where internal nodes have $\Theta(B^{\epsilon})$ children and an additional buffer of size $B$ which stores updates that are destined to the subtree of the node; leaves store $\Theta(B)$ keys. The buffers of internal nodes store updates, that is, both insertions and deletions. 

Predecessor queries can be done with a single root-to-leaf traversal, pushing updates on the search path down to the leaf. During this process, paired insertions and deletions annihilate each other. In the case where such annihilations require restructuring via standard splits and merges, it is paid for in the amortized sense with the removal of the updates from the tree.

When a buffer is full, some of the updates are removed and copied to the buffer of a child, with the child being chosen so as to maximize the number of updates pushed down; as the buffer is size $B$, the keys are constant sized, and the number of children is $\Theta(B^{\epsilon})$, there must be at least one child buffer with $\Omega(B^{1-
\epsilon})$ updates to be pushed down to it. Determining this child and moving the data can be done in a constant number of block transfers.
When leaves are full or become more than a constant factor empty, splits and merges are performed in the usual way, though merges may cause overflows of merged buffers which could cause more updates to be propagated, which could cause more merges, and this could continue as possibly effect every node in the structure.
However, the amortized cost of an insertion/deletion is $O(\frac{1}{B^{1-\epsilon}}\log_B N)$ as each update will be involved in $O(\log_BN)$ buffer overflows at a unit cost of each, spread over $B^{1-
\epsilon}$ items. We summarize the performance of $B^\epsilon$ trees:

\begin{theorem}[\cite{BrodalFa03b}]
Given any constant  $\eps$, $0<\eps<1$, the $B^\epsilon$-tree supports updates in amortized time $O(\frac{1}{ B^{1-\eps}}\log_B N)$, predecessor searches in worst-case time $O(\log_B N)$, and range searches that return $k$ elements in worst-case time $O(\log_B N+\frac{k}{B})$ in the external memory model.
\end{theorem}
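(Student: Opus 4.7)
The plan is to bound each of the three quantities independently. First I would establish the height of the structure: since each internal node has fanout $\Theta(B^{\epsilon})$ and each leaf stores $\Theta(B)$ keys, the tree has depth $O(\log_{B^{\epsilon}}(N/B)) = O(\frac{1}{\epsilon}\log_B N)$, which is $O(\log_B N)$ for constant $\epsilon$.

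For a predecessor query I would perform a single root-to-leaf descent. At each node on the path, the node's routing information together with its size-$B$ buffer fits in $O(1)$ blocks, so I can read it and check in $O(1)$ I/Os whether the search key has a pending insertion or deletion in that buffer. Combined with the height bound this yields the worst-case $O(\log_B N)$ guarantee. For a range query I would perform the same descent to the left endpoint, collecting along the way the (at most $B$ per level) pending updates in path-buffers whose keys fall in $[x,y]$, and then walk the leaves reading $O(k/B)$ blocks while merging in the collected pending updates on the fly. No flushing needs to be triggered by a query, so both query bounds hold in the worst case.

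For the amortized update bound I would use a potential argument. An update is first appended to the root buffer at cost $O(1)$. The critical combinatorial fact is that whenever an internal node's buffer contains $B$ pending updates, by pigeonhole over its $\Theta(B^{\epsilon})$ children at least one child is the destination of $\Omega(B^{1-\epsilon})$ of them, and this entire batch can be moved into the child's buffer in $O(1)$ I/Os. I would assign each pending update a potential proportional to its remaining depth to the leaves divided by $B^{1-\epsilon}$; then each flush releases $\Omega(1)$ potential, which pays for the I/O, and a new update enters the structure with total potential $O(\frac{1}{B^{1-\epsilon}}\log_B N)$. Rebalancing at leaves costs $O(1)$ I/Os per $\Theta(B)$ updates delivered to a leaf, which is dominated by the flushing cost.

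The main obstacle is the cascading behavior of internal-node splits and, especially, merges. When two internal nodes merge, their buffers are concatenated and may overflow, forcing a flush that can in turn force further overflows, merges, and splits deeper in (or higher up) the tree. The key point that makes the amortized accounting go through is that every such induced flush still moves $\Omega(B^{1-\epsilon})$ updates one level closer to the leaves, and therefore releases enough potential to cover its I/O; moreover, a split or merge of an internal node happens only after $\Theta(B^{\epsilon})$ changes to its child count, each already charged against some flush that propagated through it. Summing the per-update amortized charges over the $O(\log_B N)$ levels gives the claimed $O(\frac{1}{B^{1-\epsilon}}\log_B N)$ amortized update cost.
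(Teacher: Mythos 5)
Your proposal is essentially correct and follows the same high-level route as the paper's informal justification of this cited result: by pigeonhole on a full buffer of size $B$ over $\Theta(B^{\epsilon})$ children, some child receives $\Omega(B^{1-\epsilon})$ updates in $O(1)$ I/Os, and each update participates in $O(\log_B N)$ such flushes on its way to a leaf. Your potential function (depth-to-leaf divided by $B^{1-\epsilon}$) is just a cleaner packaging of the paper's phrasing that the unit cost of each overflow is ``spread over $B^{1-\epsilon}$ items''; the two arguments are the same accounting.

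One genuine difference is worth noting. The paper describes the predecessor query as \emph{pushing} updates down the search path (annihilating matching insert/delete pairs) and says any restructuring this causes is paid for amortized. You instead make the query read-only: collect pending updates from the buffers on the path without mutating the structure. Your choice is the right one if the goal is a \emph{worst-case} query bound, which is what the theorem claims; a mutating query could at best be charged amortized. Your remark that no flushing needs to be triggered by a query is the key point, and it is in fact the same philosophy the paper later adopts for its own deamortized structure in Section~\ref{sec:query}.

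A small incompleteness: for the range query you describe collecting buffer contents only on the descent to the left endpoint, but updates destined for keys in $[x,y]$ can reside in the buffer of \emph{any} internal node whose subtree intersects $[x,y]$, not just those on one boundary path. The fix is standard: the set of internal nodes whose subtrees intersect $[x,y]$ has size $O(k/B + \log_B N)$ because the fanout is $\Theta(B^{\epsilon})$, each such node's buffer is read in $O(1)$ I/Os, and the merge with the leaf scan still gives $O(\log_B N + k/B)$. This does not change the conclusion but should be stated for completeness.
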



\section{Our Data Structure}
\label{s:ods}

Out data structures performance is summarized by the following theorem, which we will spend the rest of this section proving:

\begin{theorem}
Given any constant  $\eps$, $0<\eps<1$, our data structure supports updates in worst case time $O(\frac{1}{ B^{1-\eps}}\log_B N)$ 
(meaning $O(1)$ I/Os every $O({ B^{1-\eps}}{\log_B N)}$ updates if $\frac{1}{ B^{1-\eps}}\log_B N<1$), predecessor searches in worst-case time $O(\log_B N)$, and range searches that return $k$ elements in worst-case time $O(\log_B N+\frac{k}{B})$ in the external memory model.
\end{theorem}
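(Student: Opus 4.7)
The amortized analysis of the $B^\eps$-tree already gives a precise I/O budget per update: $O(1/B^{1-\eps})$ per level times $O(\log_B N)$ levels. The plan is to turn this budget into a worst-case guarantee by separating the structure into two layers and performing all flushes, splits, and merges as background tasks executed at a controlled rate per update. The upper layer is the $B^\eps$-tree, nearly unchanged; the lower ``leaf'' layer is a separate structure consisting of large leaves (holding substantially more than $B$ keys) that is designed to absorb updates without triggering immediate restructuring. Pending maintenance tasks are kept in a queue, and each incoming update performs a constant number of I/Os of scheduled work in addition to inserting itself into the root buffer.

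First I would fix the scheduling for internal-node flushes. When a buffer's load crosses a threshold $\alpha B$ for some $\alpha<1$, a flush task is enqueued; the scheduler processes it by reading the buffer, identifying the child that will receive the most items, and transferring them. Because each flush moves $\Omega(B^{1-\eps})$ items at $O(1)$ I/O cost and the scheduler runs at a constant rate per update, one can maintain the invariant that every buffer fits within $O(B)$ items (hence $O(1)$ blocks) at all times. Priorities are set so that pending flushes near the top of the tree are serviced before they cause chain effects further down.

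Second I would handle the leaf layer separately. Rather than reacting when a leaf becomes too full or too empty, a background process periodically inspects the most-loaded and least-loaded leaves and performs a partial split or merge, spreading the work across many updates. Because leaves are large, splits and merges are infrequent, and each structural change at the leaf layer maps to at most one corresponding change in the upper tree; the scheduler handles that change piecemeal as well. This is the mechanism the introduction alludes to for side-stepping the cascades that plague the standard $B^\eps$-tree at its leaves.

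The hard part will be preventing the flushing cascade inside the upper tree, where merging two internal nodes can yield an overfull combined buffer that triggers further flushes and further merges. I would decouple merges from buffer sizes: before the scheduler performs a merge, it first flushes both participating buffers down to a small constant number of items, so the merged buffer is never overfull at the moment of merge. The main technical challenge is calibrating the thresholds, the queue priorities, and the per-update scheduling rate so that all invariants --- bounded buffer sizes, bounded leaf sizes, bounded queue length --- hold simultaneously under arbitrary update sequences. Once they do, a predecessor query simply traverses $O(\log_B N)$ nodes (each $O(1)$ blocks), and a range query reads the relevant $O(k/B)$ leaf blocks, giving the claimed worst-case bounds.
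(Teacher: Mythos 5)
Your high-level architecture does match the paper's: large leaves held in an auxiliary structure, a background process that acts on the most/least loaded leaf, and work done at a controlled per-update rate. But there are three concrete gaps, the first of which is the heart of the matter. You assert that ``because leaves are large, splits and merges are infrequent,'' but this is not a proof that leaf sizes stay in range. An adversary can concentrate all incoming updates on a single subtree, so the leaf being split by the background process need not be the one that is growing fastest, and you must argue that no leaf can run away to an unbounded size while the scheduler is busy elsewhere. The paper handles exactly this with a potential-style theorem of Dietz and Sleator (Theorem~\ref{DietzS87}) applied in Lemma~\ref{lemma:backgr}: one bounds the total \emph{increase} across all leaves per round, observes that the background process zeroes the extremal leaf's deviation each round, and concludes an $O(\log m)$ bound on the maximum deviation. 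The paper explicitly calls this ``the most involved part of the result,'' and your proposal omits it entirely; the parameters ($\tau = B\log_B^2 N$, the waiting quantum $B^{1-2\delta}/(c_i\log_B N)$, the split/merge thresholds $4\tau$ and $2\tau$) are then chosen precisely so that the per-round increase $c_a B^{1-\delta}\log_B N$ times the $c_t\log N$ factor from that theorem stays within $\tau$.

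Two further points would need repair. First, your strategy of flushing both buffers ``down to a small constant number of items'' before a merge does not fit the cost budget: the pigeonhole argument only guarantees a child that can absorb $\Omega(\text{buffer size}/\text{fanout})$ items, so the flush quantum shrinks geometrically as the buffer drains, and draining to $O(1)$ items costs an extra $\Theta(\log B)$ factor over draining to $\Theta(B^{1-\delta})$. The paper instead \emph{permits} the merged buffer to be temporarily overfull (up to $2B^{1-\delta}$, for exactly one non-root node) and repairs it with $O(B^\delta)$ fixed-quantum flushes, which is why its internal buffers are sized $B^{1-\delta}$ rather than your proposed $\Theta(B)$ --- with $\Theta(B)$ buffers the root would overflow during the repair window. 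Second, your query analysis only counts blocks touched; the paper also needs to show that the answer is computable from those blocks, which requires the leaf-heavy design (so that insertions asymptotically dominate deletions along any path), the fact that queries never restructure, and Lemma~\ref{lemma:arrays} for extracting the largest non-deleted key in $O(\log_B N)$ I/Os.
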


\subsection{Our Approach}
We make several major changes in our structure compared to a standard buffered $B^{\epsilon}$-tree in order to obtain worst-case bounds. We describe these changes at a high level first, and then give a full description of our data structure.
 Following the convention used in many previous papers, our data  structure stores \emph{updates}. An update is a key combined with a flag indicating whether it is an insertion or a deletion. To avoid confusion, we use update \emph{operation} to refer to a insertion or deletion operation executed by the user, and \emph{update} without the word ``operation'' to refer to a key/flag pair stored by the data structure.
\label{subsec:approach}
 We define $\delta=\eps/2$. We assume $N$ is an upper bound on the number of keys logically stored by the structure, and will state several invariants of the structure in terms of $N$; should the actual number of keys logically stored vary polynomially from from $N$, the structure will need to be rebuilt, which we explain how to do in \S\ref{sec:rebuild}.
 We make the following major changes in our structure compared to the standard $B^{\epsilon}$-tree:
\begin{itemize}
\item
  We increase the size of leaf nodes to $\Theta(B\log^2_B N)$. Each leaf is stored in an auxiliary data structure implemented as a two-level tree, with fan-out $\Theta(\log_B N)$.
\item
  Each internal node stores a buffer of size at most $ B^{1-\delta}$. The only exception is we allow the root and one other node to have a buffer size double that, $2B^{1-\delta}$, because two buffers were just combined as a result of a merge.
\item
Instead of splitting and merging leaves when they reach some size thresholds, we alternate picking the largest leaf and splitting it (if it is large enough) and picking the smallest leaf and merging it with a sibling (if it is small enough), and then possibly splitting it. Such splits and merges are propagated up in the normal way.
We show that this is sufficient to maintain the size invariants of the leaves.
\item
To ensure that a node's buffer size is within range, either because it is the root and updates have been added as the result of update operations or a merge caused the two sibling buffers to be combined,
$B^{1-2\delta}$ items are moved from a buffer to a single child (\emph{a flush}), and this flush is repeated down to a leaf. Multiple such flushes, up to $B^\delta$ may be required to reduce a merged buffer of size $2B^{1-\delta}$ down to $B^{1-\delta}$.
\item We maintain auxiliary information in each internal node. Such information includes the key values needed for searching, which is standard, as well as augmented information that allows the algorithm to find the leaves storing the most/least updates. When changes are made to leaves, the auxiliary information may need to be updated in its ancestors.
\item We alternate waiting for the user to execute $\Theta(\frac{B^{1-2\delta}}{\log_BN})$ update operations (as well as an unlimited number of queries), and doing structural work to maintain the invariants until there is an I/O. The various parameters are chosen with care so that the number of update operations does not overwhelm the rebalancing in progress.
  \item Queries do not change the structure. Instead, we guarantee by construction that all updates found in buffers of internal nodes on a root-to-leaf path fit in $\Theta(\log_B N)$ blocks and that the insertions on the path asymptotically dominate the deletions. Thus all of these updates can be copied into a temporary workspace, all insert-delete pairs are removed, the result of the predecessor query is determined and returned. Range queries work in a similar manner, with the insertions contained in the buffers that cover any range dominating the deletions. This approach avoids the need to answer queries by pushing down the changes to the root and causing a possibly uncontrolled number of cascading changes as in the classic $B^\epsilon$-tree.
\end{itemize}

Summing up, our tree is leaf-heavy, among the updates stored in the buffers insertions asymptotically dominate deletions, most updates are stored in the leaves, and each leaf stores $\Theta(B\log_B ^2 N)$ updates. The tree structure is maintained by regularly splitting and merging the leaves that are largest and smallest and not at some fixed threshold of size.

\subsection{Invariants}

As in a standard $B^\epsilon$ tree, internal nodes store buffers of updates that are destined for some leaf in their subtreee, and keys to guide the search. The leaves, however, are separate structures described in \S\ref{sec:leaf}.

\begin{description}
\item[Child count.] Every internal node has a number of children in the range $[\frac{1}{2}B^\delta,B^\delta]$. The root may have less children, but at least 2. This invariant can be violated by a single node at any given time, and may only violate it by $\pm 1$. 
\item[Leaf size.] Every leaf stores a number of updates in the range $[ B \log^2_BN,5 B \log^2_BN]$. We do not count matching insertion/deletion pairs in a leaf as these will be eliminated.

\item[Internal node buffer size.] Each internal node has at most $B^{1-\delta}$ updates in its buffer. This invariant can be violated by two nodes at any given time. One of which is the root, and the other is a node that will have just had its buffer merged with a sibling. For these two nodes we ensure that there are at most $2B^{1-\delta}$ updates in their buffers. These nodes are said to be \emph{overfull}.

\item[Height.] The height of the tree is always at most $c_h \log_BN$ for some constant $c_h$. This follows from the preceding invariants.

\item[Auxiliary information.] As is standard in a leaf oriented-tree, each internal node stores keys needed to guide searches. Additionally, each internal node stores which child contains the subtree with the leaf of maximum size, and what that size is, as well as the same information for the leaf of minimum size.

%

\end{description}

\subsection{Updates}

First we describe at a high level how update operations are executed, then present in Figure~\ref{alg} the details, followed by justification of correctness and runtime.

One core subroutine is the \emph{flush} which moves $B^{1-2\delta}$ updates from a node's overfull buffer to one of its children, and then repeats this process on that child until either a non-overfull buffer or a leaf is reached. If a buffer is overfull it has at least $B^{1-\delta}$ updates in its buffer. Thus, by pigeonhole there must be one of its at most $B^\delta$ children for which it can move $B^{1-2\delta}$ updates to.

At a high level, our algorithm can be thought of always being in the process of doing one of the following:

\begin{itemize}
    \item While the root buffer is overfull, flush. This could require $\Theta(B^\delta)$ flushes.
    \item The largest leaf is split if it stores at least $4 \log^2_B N$ updates. This split is propagated up the tree.
    \item The smallest leaf is merged if it stores less than $2 \log^2_B N$ updates. This merge is propagated up the tree. However, each merge may cause a buffer to become too large, up to $2B^{1-\delta}$ instead of the normal limit of $B^{1-\delta}$; this is fixed by performing $O(B^\delta)$ flushes on the overfull buffer to remove the overfull condition. It is crucial to note that $O(B^\delta)$ flushes could be performed starting on \emph{each} of the $O(\log_BN)$ nodes being split; as the flush is itself a loop that could visit $O(\log_BN)$ nodes, this creates a triple-nested loop whose innermost operations, one step of a flush, could run $\Theta(B^\delta \log^2_BN)$ times.
\end{itemize}

One cycle will consist of either splitting or merging a leaf and propagating upwards, where after each time a node is split or merged the root is flushed. Figure~\ref{alg} shows the full details of this.

The crucial point is that after every I/O in the above infinite process the algorithm stops and waits for the user to execute $\frac{B^{1-2\delta}}{c_i\log_BN}$ update operations, which it adds to the root's buffer, before continuing; the constant $c_i$ is defined later.

We now argue the claimed runtime is correct, the claimed number of times each loop runs is correct, and the invariants hold. Arguing that the leaf size invariant holds is the most involved part of the result. 


\begin{figure} \small
\newcounter{lno}
\newcommand{\labelline}[1]{\item[\refstepcounter{lno}\label{#1}{\texttt{\ [\ref{#1}]}}]}
\begin{myenumerate}
    \labelline{l1} Alternate between a \emph{split phase} and a \emph{merge phase}. Together, these are a \emph{cycle}.
    \begin{myenumerate}
        \labelline{l3} Set $n_1$ to the root. Move $n_1$ to the leaf  with the most/least updates if a split/merge phase by repeatedly moving $n_1$ down to the appropriate child using the auxiliary information. (This takes $c_h \log_B N$ steps).
            \labelline{l4}
        If a split phase and $n_1$'s buffer has at least $4 B \log_B^2N$, split $n_1$; if a merge phase and $n_1$ has at most $2B \log_B^2N$ updates, merge $n_1$ with one of its siblings and if this results in a leaf with more than $5 B \log_B^2 N$ updates split it evenly. 
        \labelline{l5} \textbf{Propagate splits/merges up.} While $n_1$ is not null: (the parent of the root) (This loop will run at most $c_h \log_B N$ times).
        \begin{myenumerate}
            \labelline{l6} Move $n_1$ to its parent.
            \labelline{l7} Split/merge $n_1$ as in a $B$-tree if needed; also split or merge the buffer(s). Update the auxiliary information in $n_1$.
            \labelline{l8} Set $n_2$ to $n_1$.
            \labelline{l9} \textbf{Repeated flushing of split/merge node.} While $n_2$ is an internal node with an overfull buffer: (This loop will run at most $B^\delta$ times).
            \begin{myenumerate}
                \labelline{l10} Set $n_3$ to $n_2$.
                \labelline{l11} \textbf{Propagating a flush of a split/merge node downward.} While $n_3$ is an internal node with $n_3$ an overfull buffer: (This loop will run  at most $c_h \log_BN$ times).
                \begin{myenumerate}
                        \labelline{l12} Identify a child of $n_3$, call it $n_4$, for which the buffer of $n_3$ can move at least $B^{1-2\delta}$ elements from $n_3$'s buffer to $n_4$'s, this exists by pigeonhole.
                        \labelline{l13} Move $B^{1-2\delta}$ items from $n_3$'s buffer to $n_4$'s.
                        \labelline{l14} Set $n_3=n_4$.
        \end{myenumerate}
    \labelline{l16} \textbf{Update auxiliary information.} Move $n_3$ to its parent until it reaches the root while updating the auxiliary information. (This loop runs at most $B^\delta$ times).

            \end{myenumerate}

    \labelline{l18} While the root buffer  is overfull, set $n_5$ to the root and: (This loop runs at most $B^\delta$ times). 
    \begin{myenumerate}
    \labelline{l19} \textbf{Flush from root.} While the buffer of $n_5$ is overfull: (This loop runs at most $c_h\log_BN$ times).
    \begin{myenumerate}
                        \labelline{l20} Identify a child of $n_5$, call it $n_6$, for which the buffer of $n_5$ can move at least $B^{1-2\delta}$ elements from $n_5$'s buffer to $n_6$'s; this exists by pigeonhole.
                        \labelline{l21} Move $B^{1-2\delta}$ items from $n_5$'s buffer to $n_6$'s.
                        \labelline{l22}  Set $n_5=n_6$.
    \end{myenumerate}
    \labelline{l24} \textbf{Update auxiliary information.} Moving $n_5$ to its parent until it reaches the root while updating the auxiliary information.
    \end{myenumerate}    \end{myenumerate}
           \end{myenumerate}
    \end{myenumerate}
    \caption{Full description of how update operations are executed. Note that this is presented as an infinite loop. In addition to the above description, every time there is an I/O the algorithm stops and waits for the user to execute $\frac{B^{1-2\delta}}{c_i\log_BN}$ update operations and then contininues until the next I/O occurs.} \label{alg}
\end{figure}

\paragraph*{The leaf data structure: a preview.}

We include here a summary of the leaf operations and their performance so that we may analyze the main update procedure using the leaves as a black box. Section~\ref{sec:leaf} describes how the leaves are implemented to achieve this.
The leaf data structure will support the following subroutines; as in the description of the main procedure to execute update operations, these algorithms alternate execution and waiting for the user to execute update operations.

\begin{description}
\item[Bulk-insert of updates.]
$B^{1-2\delta}$ updates are added to the leaf. During the execution of this $O(\log_BN)$ I/Os are performed, with the algorithm stopping after each one to wait for the user to execute $\frac{B^{1-2\delta}}{c_i\log_BN}$ update operations which are added to the root buffer of the main tree.
\item[Split and merge.]
These take $O(B^\delta \log_BN)$ I/Os to complete; the algorithm stops after each I/O to wait for the user to execute $\frac{B^{1-2\delta}}{c_i\log_BN}$ update operations which are added to the root buffer of the main tree.
\end{description}

\paragraph*{Runtime.} We require that after each I/O  we wait for $\frac{B^{1-\epsilon}}{c_i\log_BN}$ update operations to be executed by the user (recall $2\delta=\epsilon)$. This is exactly what having a worst case runtime of $O(\frac{\log_BN}{B^{1-\epsilon}})$ means, when $\frac{1}{B^{1-\epsilon}}<1$, a constant number of I/Os every $\frac{B^{1-\epsilon}}{c_i\log_BN}$ updates. 

We have presented the algorithm and analysis for the (common) case where $\frac{B^{1-\epsilon}}{c_i\log_BN}$ is at least one. If, however, we are in the (uncommon) case where $\frac{B^{1-\epsilon}}{c_i\log_BN}$ is less than one, instead wait for the user to execute a single update operation and add it to the root buffer every $\frac{c_i\log_BN}{B^{1-2\delta}}$ I/Os. This trivially gives the claimed runtime.

\paragraph*{Loops.} Here we argue that each loop runs the number of times indicated. The number of times each loop runs, except those of lines \ref{l9} and \ref{l18} of Figure~\ref{alg}, are clearly bounded by the height of the structure as they involve going up or down. However, the repeated flushing in  lines \ref{l9} and \ref{l18} requires a different argument. If one of these loops run, it is because a flush is happening at one of the two nodes that may violate size of an internal node invariant, either the root (line~\ref{l18}) or a node that had just had it buffer merged with another (line~\ref{l11}). These nodes may have up to $2B^{1-\delta}$ updates in their buffers. Each execution of these loops will remove $B^{1-2\delta}$ elements from the node being flushed. This will require looping at most $B^\delta$ times to restore the invariant that they have at most $B^{1-\delta}$ updates in their buffer.

\paragraph*{Children.} As in a standard $B$ or $B^\epsilon$ tree, the number of children may be out of range, and only by one, while in the process of propagating a split or merge up; this node where the splits and merges are in progress is $n_1$ in our presentation.

\paragraph*{Internal node buffer size.}

In each execution of the loop at line~\ref{l5}, the total number of I/Os is at most $c_i  B^\delta \log_B N$, for some $c_i$. This is where $c_i$ is defined. Since $\frac{B^{1-2\delta}}{c_i\log_B N}$ update operations are executed per I/O by adding the updates to the root, this means 
$B^{1-\delta}$ updates are added to root per execution of the loop at line~\ref{l5}. In execution of the loop at line~\ref{l5} we explicitly flush from the root in line~\ref{l18} until the internal buffer size is not longer overfull.

The only other time an internal node's buffer becomes overfull is after a merge, but in line~\ref{l11} flushes are performed until the node is no longer overfull.

\paragraph*{Leaf size.}
In order to show that the leaf size invariant is maintained, that is each leaf holds from $  B\log^2_B N$ to $5  B\log^2_B N$ updates, we will need the following result which comes from Theorem 5 of \cite{DietzS87}. Our formulation of this theorem will allow its easy use later.

\begin{theorem}
  \label{DietzS87}
  Let $X$ be a set of real valued positive variables and let $b$ be a positive constant. All variables $x\in X$ are initially $0$. We execute the following process which proceeds in rounds and in each round may change the values of the $x\in X$ as follows:
  \begin{enumerate}
    \item Values may be decreased without restriction
    \item The sum of the increases in $x$ of those $x\in X$ that increase in a round is at most $b$.
    \item During each round, the maximum value of $x$ must be set to zero at some point.
  \end{enumerate}
  Additionally, new elements may be added to $X$ with a value of 0 and elements with a value of 0 may be removed from $X$ at any time.
  If $|X|$ is always bounded by $m$,
  there is a constant $c_t$ such that at all times $x_j \leq c_t b \log m$ for all $j$.
\end{theorem}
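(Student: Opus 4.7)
My plan is to establish Theorem \ref{DietzS87} via a weighted potential-function argument, in the style of the original Dietz--Sleator analysis. Define
\[
\Phi(t) \;=\; \sum_{x \in X}\, f(x), \qquad f(x) \;=\; 2^{x/b},
\]
so that $f$ is convex, $f(0)=1$, and $\Phi(0) \le m$. The goal is to show that $\Phi$ stays $O(m)$ throughout the process; since $\Phi \ge f(\max_j x_j) = 2^{\max_j x_j /b}$, this immediately yields $\max_j x_j \le b\log_2(\Phi) = O(b\log m)$, which is the desired bound with $c_t = \Theta(1)$.

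First I would analyze the per-round change $\Delta\Phi$ under each of the three allowed operations. Decreases only decrease $\Phi$ and may be ignored. For an increase of $x$ by $\delta \le b$, convexity gives $f(x+\delta)-f(x) = 2^{x/b}(2^{\delta/b}-1) \le 2^{x/b}\,\delta/b$ (using $2^u-1\le u$ for $u\in[0,1]$); summing over all increases in the round, whose magnitudes total at most $b$, the net potential gain is at most $\max_j 2^{x_j^{*}/b}$, where $x_j^{*}$ is the value of the largest element at the moment it is touched by an increase. The mandatory zeroing of the current max (value $M$) contributes $-(2^{M/b}-1)$.

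Next, I would bound these two contributions against each other. The key observation is that once the max $M$ has been zeroed, no subsequent element can exceed $M+b$ in the same round (total increase is $\le b$), and any value attained before the zeroing was $\le M$; hence the element attaining the round's maximum $M_{\max}$ satisfies $2^{M_{\max}/b}\le 2\cdot 2^{M/b}$. Combining with the zeroing, $\Delta\Phi \le 2^{M/b}+1$. I then apply the averaging inequality $2^{M/b} \ge \Phi/m$ (the max's $f$-value is at least the average) to conclude that a round starting with $\Phi$ sufficiently larger than $m$ must shrink $\Phi$ multiplicatively, while a round with $\Phi = O(m)$ can grow it by only $O(1)$. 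A standard fixed-point / amortized argument then gives the invariant $\Phi \le Cm$ for some absolute constant $C$, which proves the theorem.

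The main obstacle will be handling the arbitrary interleaving of operations within a round, in particular that the zeroing may happen before, between, or after the increases, so that ``the current max'' being zeroed is not necessarily the largest value attained during the round. The argument above addresses this by the inequality $M_{\max} \le M + b$, but keeping the constants tight requires care. A secondary issue is that $|X|$ may shrink and grow through the insertion and removal of zero-valued elements; since such elements contribute exactly $1$ to $\Phi$, the averaging bound $2^{M/b} \ge \Phi/|X|$ still goes through provided $|X| \le m$ holds at the moment of zeroing, which is exactly the hypothesis. A fully combinatorial alternative, which I would use as a sanity check, is to track for each $x_j$ the number of rounds since its last zeroing: if $x_j = v$ now, at least $v/b$ such rounds elapsed, and in each of them some element of value $\ge $ that round's max (hence comparable to $x_j$) was zeroed, yielding a recursive/pigeonhole bound on $v$ of the form $v = O(b\log m)$.
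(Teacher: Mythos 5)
The paper does not prove this theorem: it cites it as a reformulation of Theorem~5 of Dietz and Sleator, so there is no in-paper proof to compare against, and your proposal must stand on its own merits. Your approach is in the right spirit (the Dietz--Sleator potential argument), but as written it has a genuine gap at the decisive step. With the base-$2$ potential $f(x)=2^{x/b}$, the per-round increase contribution is bounded by $\max_j 2^{x_j^*/b}$ and the mandatory zeroing contributes $-(2^{M/b}-1)$; even accepting your claim $\max_j x_j^*\le M+b$, this gives $\Delta\Phi \le 2\cdot 2^{M/b} - 2^{M/b} + 1 = 2^{M/b}+1$, which is \emph{always positive}. The subsequent assertion that ``a round starting with $\Phi$ sufficiently larger than $m$ must shrink $\Phi$ multiplicatively'' simply does not follow from this bound; the averaging inequality $2^{M/b}\ge \Phi/m$ makes the upper bound on $\Delta\Phi$ \emph{larger}, not negative. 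The root cause is that base $2$ sits exactly on the boundary where the chord inequality $2^u-1\le u$ is tight at $u=1$: the increase gain $(\alpha-1)\max_j f$ with $\alpha=2$ exactly cancels the zeroing credit $-\max_j f$, leaving $\Delta\Phi\le 1$ at best, which is not a contraction. The fix is to take $f(x)=\alpha^{x/b}$ for some $\alpha\in(1,2)$ (e.g.\ $\alpha=3/2$); then $\alpha^u-1\le u(\alpha-1)$ gives increases $\le(\alpha-1)\max_j f$ while zeroing still contributes $-\max_j f + 1$, so $\Delta\Phi\le -(2-\alpha)\max_j f + 1 \le -(2-\alpha)\Phi/m + 1$, which is negative once $\Phi \ge m/(2-\alpha)$, yielding $\Phi = O(m)$ and $\max_j x_j \le b\log_{\alpha}O(m) = O(b\log m)$.

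A secondary issue, which you correctly flag as ``the main obstacle,'' is the claim that ``any value attained before the zeroing was $\le M$.'' This is false in the stated model: a variable can be large early in the round, decrease, and then the zeroing (of a now-smaller max $M$) happens. However, this can be repaired by charging such an overshoot to the decrease it incurred (if $x_j^*>M$, the drop from $x_j^*$ down to $\le M$ by zeroing time already costs the potential at least $f(x_j^*)-f(M)$, cancelling the excess increase gain). So the timing issue is fixable; the base-$2$ issue is the one that actually blocks the fixed-point conclusion. The combinatorial fallback you sketch in the last paragraph is not developed enough to evaluate.
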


\begin{lemma}
  \label{lemma:backgr}
    The leaf size invariant is always satisfied, that is, the number of updates in each leaf node is always between $ B \log^2_BN$ and $5 B \log^2_BN$.
\end{lemma}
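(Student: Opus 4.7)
The plan is to prove both the upper ($\leq 5B\log^2_B N$) and lower ($\geq B\log^2_B N$) bounds on leaf size by two applications of Theorem~\ref{DietzS87}, using one cycle (a split phase followed by a merge phase) as a ``round.'' For the upper bound, I would take $X$ to be the set of leaves, $x_l$ to track the size of leaf $l$ (up to a re-labelling that treats a split as removing $l$ from $X$ with value $0$ and inserting two new zero-valued leaves into which the halves' contents are counted as fresh increases), and a round to be one cycle. The number of leaves is $m = \Theta(N/(B\log^2_B N))$ so $\log m = O(\log N) = O(\log B \cdot \log_B N)$.

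The main quantitative step is to bound the per-round increase $b$. I would walk through Figure~\ref{alg} and argue that in a single iteration of the outer loop at line~\ref{l5}, the flush loops at lines~\ref{l9} and~\ref{l18} each consist of at most $B^\delta$ cascades of line~\ref{l11}/\ref{l19} type, where each cascade pushes exactly $B^{1-2\delta}$ items down a root-to-leaf path, so at most $B^\delta\cdot B^{1-2\delta}=B^{1-\delta}$ items reach any leaves per iteration of line~\ref{l5}. Since line~\ref{l5} runs at most $c_h \log_B N$ times per cycle, the total per-cycle sum of increases across leaves is $b = O(B^{1-\delta}\log_B N)$. Theorem~\ref{DietzS87} then gives
\[
  \max_l x_l \;\leq\; c_t\, b\, \log m \;=\; O\!\bigl(B^{1-\delta}\log_B N \cdot \log B\cdot \log_B N\bigr) \;=\; O\!\bigl(B^{1-\delta}\log B\cdot \log^2_B N\bigr),
\]
which, using the standard assumption $\log B \leq B^\delta$, is $O(B\log^2_B N)$; choosing constants appropriately puts this below $5B\log^2_B N$. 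The ``max reset to $0$'' hypothesis is discharged by observing that the split phase triggers at $|l|\geq 4B\log^2_B N$ and produces two halves each of size at most $2.5B\log^2_B N$; under the re-labelling above, the large leaf is removed (value set to $0$) and the two new leaves are created with value $0$, with the subsequent sizes attributed to future increases in the next round.

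For the lower bound I would run the same argument dually on $y_l = C - |l|$ with $C$ slightly above $3B\log^2_B N$: $y_l$ only increases when $|l|$ shrinks, which happens when a flushed deletion annihilates a buffered insertion in the leaf, and this is bounded by the same $b$ per cycle; the merge phase, which fires when $|l|\leq 2B\log^2_B N$, plays the role of the reset, since the merged (or merged-then-split) leaves are strictly larger than the former minimum. The main obstacle is reconciling the clean ``set the maximum to $0$'' hypothesis of Theorem~\ref{DietzS87} with the fact that splits reduce the max to roughly half and merges only partially restore the min; I would handle this either by the leaf-relabelling above, or by applying the theorem to the thresholded quantity $\max(0, |l|-3B\log^2_B N)$ (respectively $\max(0, 3B\log^2_B N - |l|)$), under which splits (resp.\ merges) do drive the relevant value to exactly $0$. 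A second delicate point is verifying that updates arriving at the root during a cycle (at rate $B^{1-\delta}$ per cycle) are already accounted for in $b$ via the root flushes of line~\ref{l18}, so that the increase budget is not double-charged and the polynomial relation $\log B \leq B^\delta$ is enough to close the bound.
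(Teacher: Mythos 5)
Your overall plan matches the paper's: two dual applications of Theorem~\ref{DietzS87}, one cycle per round, a per-round increase budget $b = O(B^{1-\delta}\log_B N)$ derived from the nested flush loops, and absorbing the final $\log B$ factor using $\log B \leq B^{\delta}$. You also correctly flagged the central obstacle --- a split or merge does not drive a raw leaf size to zero. But neither of your two proposed fixes works as stated. The relabelling fix fails quantitatively: if the two half-leaves' $\sim 2\tau$ worth of contents (with $\tau = B\log^2_B N$) are ``counted as fresh increases,'' then $b \geq 4\tau$, which dwarfs $B^{1-\delta}\log_B N$ and makes the bound from Theorem~\ref{DietzS87} useless. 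Deferring the increase to ``future rounds'' doesn't rescue it either, because then the quantity fed into Theorem~\ref{DietzS87} is no longer the actual leaf size, so the theorem's conclusion no longer implies the invariant.

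The thresholding fix is the paper's route, but your threshold $3\tau$ does not satisfy the reset hypothesis. The theorem requires the \emph{maximum} tracked value to be set to zero in every round. With threshold $3\tau$, a largest leaf sitting at, say, $3.9\tau$ has a strictly positive value but is not split (the split fires only at $\geq 4\tau$), so nothing zeroes the maximum; dually, a merge (fired at $\leq 2\tau$) only brings the merged leaf up to size $\geq 2\tau$, leaving $\max(0, 3\tau - n(\ell))$ as large as $\tau$ rather than $0$. The correct move is to align the threshold exactly with the split/merge trigger: the paper uses $d(\ell) = \max(0,\, n(\ell) - 4\tau)$ --- so that if no leaf is split-eligible then $d \equiv 0$ trivially, and if one is, it is precisely the max-$d$ leaf and a split zeroes it --- and the analogous $\max(0,\, 2\tau - n(\ell))$ for the lower bound. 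With those thresholds, Theorem~\ref{DietzS87} yields $\max d,\max d' \leq \tau$ and hence $n(\ell) \in [\tau, 5\tau]$, which is the claimed invariant. The rest of your calculation (loop counts giving $b$, $\log m \leq \log N = \log B \cdot \log_B N$, final absorption for large $B$) is correct and mirrors the paper.
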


\begin{proof}
To avoid repetitive clutter, let $\tau$ denote $B \log^2_BN$.

We examine how the number of updates in each leaf can change. These changes are caused by the flushing of updates into the leaves, splitting leaves, and merging leaves. We discuss each of these separately.

Let $L$ be the set of leaves, and let $n(\ell)$ be the number of insertions stored in a leaf $\ell$ without a matching delete. Define $d(\ell)$ be $\max(0, n(\ell)- 4 \tau )$, and
$d'(\ell)$ be $\max(0, \tau -d(\ell) )$. Thus, $d$ is nonzero when a leaf is within $\tau$ of its upper limit and $d'$ is nonzero when a leaf is within $\tau$ of its lower limit.

In a cycle, updates can be added to leaves when a flush reaches a leaf, this could happen in lines~\ref{l13} and \ref{l21} of Figure~\ref{alg}. Multiplying by the number of times these lines may be executed in a cycle, the number of updates that are added to all leaves 
is 
at most $c_a B^{1-\delta} \log_B N$ for some constant $c_a$.
This can cause $\sum_{\ell \in L} d(\ell)$ to increase by that much during a cycle as insertions are added to a leaf.
When an deletion is added to a leaf, this will cause $n(\ell)$ to decrease by 1.
Thus $\sum_{\ell \in L} d'(\ell)$ can also increase by $c_a B^{1-\delta} \log_B N$ per cycle.

Additionally, any split will cause the created leaves during the split to have $d(\ell)$ and $d'(\ell)$  be  zero as they will have a size of at least $4 \tau$ and at most $5 \tau$ before the split, which results in a size of at least $ 2 \tau$ and at most $\frac{5}{2} \tau $.

If no split is performed then $n(\ell)<4 \tau $ for all leaves $\ell$ and thus $d(\ell)$ is zero for all leaves $\ell$.

If a merge is performed, 
a node with $n(\ell)$ in the range $[\tau ..2 \tau]$ merges with its sibling which has $n(\ell)$ in the range $[\tau ..5 \tau]$, resulting a new leaf with size in the range $[2\tau..6\tau]$. If its size is in the range $[2\tau..4\tau]$ it has a $d(\ell)$ and $d'(\ell)$ value of 0. Otherwise its size is in the range $[4\tau..6\tau]$ and it is split evenly into two nodes each of which has size $[2\tau..3\tau]$ and thus has $d(\ell)$ and $d'(\ell)$ be $0$.

Thus in each cycle we have shown that $\sum_{\ell \in L} d(\ell)$ and 
$\sum_{\ell \in L} d'(\ell)$ each increase by at most $c_a B^{1-\delta} \log_B N$ and
the leaf $\ell$ with maximum $d(\ell)$ and $\ell'$ with maximum $d'(\ell)$ will both have their values $d$ and $d'$ reset to zero if a split/merge was performed, and if it was not, they were zero already.
Thus we can apply Theorem~\ref{DietzS87} twice: one where the elements of $X$ represent the $d(\ell)$ values, $m$ represents the number of leaves (which is trivially at most $N$) and $b$ represents 
$c_a B^{1-\delta} \log_B N$, and once using $d'(\ell)$ instead.

This yields bounds on $\max_\ell d(\ell)$ and $\max_\ell d'(\ell)$ of
$(c_a B^{1-\delta} \log_B N) \cdot c_t \log N$, which is at most $B \log^2_BN$, for large enough $B$.

The bounds on $d(\ell)$ and $d'(\ell)$ being at most $\tau$ immediately follows by the definition of $d$ and $d'$ that $n(\ell)$ is in the range $[\tau,5\tau]$ which, recalling that $\tau=B \log^2_BN$, is the claim of the Lemma.
\end{proof}

\subsection{Leaf Data Structures} 
\label{sec:leaf}
We now describe how the leaf data structures are implemented. 
Each leaf data structure supports adding $B^{1-2\delta}$ updates into the leaf, splits, merges, and searches. The main structure ensures that splits and merges are performed to maintain the leaf's size invariant.

We store each leaf $\ell$ as a two-level tree, and use the terminology \emph{micro-root} and \emph{micro-leaf} to avoid confusion with the main structure. 
Each leaf $T(\ell)$ consists of a micro-root with $\Theta(\log_B N)$ child micro-leaves.  
Each micro-leaf of $T(\ell)$ contains $\Theta(B\log_B N)$ updates in sorted order. 
The micro-root of $T(\ell)$ has a buffer with at most $B^{1-\delta}\log_B N$ updates. 
Every time when we flush $B^{1-2\delta}$ updates into $\ell$ we add them to the root buffer of $T(\ell)$.  
If the root buffer contains over $B^{1-\delta}\log_B N$ updates, we identify the micro-leaf  $\nu$ where at least $B^{1-2\delta}$ updates can be moved. 
We flush those updates to $\nu$ and add them to blocks of $\nu$. 
We can merge the newly flushed updates with the extant updates in $\nu$ in $O(\log_B N)$ I/Os, while maintaining the updates in sorted order, and annihilating matching insertion/deletion pairs on the same element.
A deletion that is added to a micro-leaf is mutually annihilated with the insertion of the same element, which must be present in the leaf. 
Thus micro-leaves contain insertions only. Since deletions can be present only at the buffer of the micro-root of $T_\ell$, there can be at most $B^{1-\delta}\log_B N$ deletions in any leaf node.    
Micro-leaves of $T(\ell)$ can be split and merged in a standard way so that each micro-leaf holds $\Theta(B \log_B N)$ updates. 

When a leaf $\ell$ is split into $\ell_1$ and $\ell_2$, we distribute the micro-leaves of $T(\ell)$ among $T(\ell_1)$ and $T(\ell_2)$; one micro-leaf can be possibly split into two parts. The cost is $O(\log_B N)$ I/Os.  When two leaves $\ell_1$ and $\ell_2$ are merged, all micro-leaves of $T(\ell_1)$ and $T(\ell_2)$ become the micro-leaves of $T(\ell)$. The buffers of the micro-roots of $T(\ell_1)$ and $T(\ell_2)$ are merged too. The cost of merging is $O(\log_B N)$ I/Os. After merging, the micro-root buffer of $T(\ell)$ can contain up to $2B^{1-\delta}\log_B N$ I/Os. We ``repair'' the micro-root buffer by moving excessive updates to micro-leaves. We can flush $B^{1-2\delta}$ updates to a micro-leaf in $O(\log_B N)$ I/Os. Hence the total cost of flushing all excessive updates to micro-leaves is $O(B^{\delta}\log_B N)$.  

In all of these operations, every time there is an I/O the algorithm stop and waits for $\frac{B^{1-2\delta}}{c_i\log_BN}$ update operations to be executed by the user, adds these operations to the buffer of the root of the main tree, and then resumes.

\subsection{Queries}
\label{sec:query}

To answer a range reporting query $[a,b]$, we identify all leaves that intersect with the range $[a,b]$. If a leaf $\ell$  intersects with but is not contained in $[a,b]$, we identify the micro-leaves that intersect with $[a,b]$. If a micro-leaf $\ell_m$ intersects with $[a,b]$, we traverse $\ell_m$ and make a list of insertions stored in $[a,b]$. If a micro-leaf is entirely contained in $[a,b]$, we list  all its insertions. Let $L(\ell)$ denote the list of insertions found in $\ell$.   We examine updates stored in the root buffer of $T(\ell$) and modify $L(\ell)$ accordingly. Since the updates in the root buffer are in sorted order, $L(\ell)$ can be modified in $O(\log_B N +|L(\ell)|/B)$ I/Os, where $|L|$ denotes the number of updates in a list $L$.  
If a leaf $\ell$ is entirely contained in $[a,b]$, we  make a list $L(\ell)$ containing all  insertions stored in  micro-leaves and in the root buffer of $T(\ell)$ (minus the deletions in the root buffer of $T(\ell)$). This takes $O(|L(\ell)|/B)$ I/Os.  There are at most two leaves that partially intersect with the query range $[a,b]$. Hence total cost of generating lists $L(\ell)$ for all relevant leaves $\ell$ is $O(k/B + \log_B N)$, where $k$ is the number of insertions that must be reported.  We then visit  all ancestors of relevant leaves $\ell$, add insertions from $[a,b]$ to an extra list $L_a$, and copy all updates from $[a,b]$ to internal memory.  Finally we report all elements from $L(\ell)$, excluding the deletions.  Assuming that the internal memory can hold $O(\log_B N \cdot B^{1-\delta})$ deletions, this can be done in $O(k/B)$ I/Os.  Thus we can answer reporting queries in $O(k/B +\log_B N)$ I/Os under assumption that the internal memory is not too small, $M> B\log_B N$. 
We note that $B^\epsilon$ trees as well as the structures of Bender et al.~and Bender et al.~\cite{bender2020flushing,BenderFaJo17} all similarly require small non-constant amounts of memory for queries.

To answer a predecessor query $q$ (i.e., to find the largest element that is not larger than $q$), we create a list $L_1$ of  $B\log_B N$ largest insertions stored in the leaf nodes that are $\le q$. This can be done in $O(\log_B N)$ I/Os using  the same reporting procedure as described above. Since a leaf holds $\Theta(B\log^2_B N)$ insertions,  insertions in $L$ are from at most  two consecutive leaves. We examine all insertions stored in ancestors of these two leaves and make a list $L_2$ of all insertions $\le q$. Let $L_3$ denote the list of $B\log_B N$ largest elements in $L_2$ and $L_3$.  Finally we  make the list $L_d$ that consists of all deletions $\le q$  stored in  ancestors. We find the largest element in  $L_3$ that is not in $L_d$ using the following lemma.
\begin{lemma}
  \label{lemma:arrays}
  Let $X$ and $Y$ be two sets such that $|X|=2m$, $|Y|=m$, and $Y\subset X$. We can find the largest element $e\in X$, such that $e\not\in Y$ in $O(m/B)$ I/Os. 
\end{lemma}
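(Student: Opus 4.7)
The plan is to reduce the problem to a single simultaneous linear scan of the two sets in sorted order. First I would arrange for $X$ and $Y$ to be available in sorted order, say decreasingly; if they are not already sorted (which they will be in the actual invocation of the lemma), I would read them into internal memory and sort in RAM. This is valid in the DAM model because internal-memory work is free, and in the regime of \S\ref{sec:query} both sets fit in $M$ (the paper already assumes $M > B\log_B N$ and the sets arising here have size $O(B\log_B N)$). The read costs $O(m/B)$ I/Os.

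With both sets sorted decreasingly, I would maintain two pointers $i$ into $X$ and $j$ into $Y$, both starting at position $0$ (the largest element). At each step:
\begin{itemize}
\item If $X[i]=Y[j]$, advance both $i$ and $j$ (the current element of $X$ lies in $Y$, so skip it).
\item If $X[i]>Y[j]$, return $X[i]$ as the answer.
\item The case $X[i]<Y[j]$ cannot arise: by induction, every $Y[j']$ with $j'<j$ was matched with some $X[i']$ at an earlier step, so all elements of $X$ strictly greater than $Y[j]$ have already been consumed by such matches, whence $X[i]\le Y[j]$ implies equality in the loop invariant.
\end{itemize}
Correctness of the answer follows because when $X[i]>Y[j]$, the sortedness of $Y$ guarantees $X[i]$ exceeds every remaining element of $Y$, and the loop invariant guarantees $X[i]$ is not equal to any previously-passed element of $Y$; hence $X[i]\notin Y$, and since we scan $X$ from the top, $X[i]$ is the largest such element. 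Termination is guaranteed because $|X\setminus Y|=m>0$, so an answer is produced after at most $O(m)$ pointer advances.

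For the I/O bound, each pointer only moves forward through a contiguous sequential region, so the scans over $X$ and $Y$ together read at most $|X|+|Y|=3m$ elements, for a cost of $O(m/B)$ block transfers. Together with the $O(m/B)$ I/Os to read (and possibly internally sort) the inputs, the total cost is $O(m/B)$. The main subtlety is verifying that the $X[i]<Y[j]$ case is genuinely impossible under the invariant; this is exactly where the hypothesis $Y\subset X$ (not merely $|Y|\le|X|$) is used, and it is what allows the algorithm to avoid any backtracking or extra bookkeeping that would inflate the I/O cost.
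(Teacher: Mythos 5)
Your proof is correct, but it takes a genuinely different route from the paper's. You reduce to a merge-style simultaneous scan of the two sets in decreasing sorted order, advancing paired pointers until the first mismatch, which (as you correctly argue, using $Y\subset X$ and sortedness) is the answer. The paper instead uses linear-time selection to compare the $r$-th largest elements $e(X,r)$ and $e(Y,r)$, observing that $e(X,r)=e(Y,r)$ iff the top $r$ of each agree, and then binary-searches for the smallest $k'$ with $e(X,k')>e(Y,k')$, halving the ranges on each side so the selection costs form a geometric series summing to $O(m/B)$. Your scan is more elementary and easier to verify, but note what the paper's version buys: the selection-based argument never assumes the inputs are sorted, and it never needs the two sets to fit in internal memory. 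This matters because the paper is explicitly careful to claim predecessor queries (the place this lemma is used) do \emph{not} require the assumption $M>B\log_B N$ — only range-reporting queries do. Your fallback of ``read into internal memory and sort in RAM'' would silently reintroduce that assumption; your approach only avoids it if the invoking lists are already maintained sorted, which you assert but which the paper avoids relying on at this step. Also, a small presentational point: since you only ever advance $i$ and $j$ together, the two pointers are always equal, so the algorithm is simply ``find the least $k$ with $X[k]\ne Y[k]$ and return $X[k]$''; stating it that way makes the invariant and the impossibility of $X[i]<Y[j]$ (because $Y[k]\in X$ and is strictly smaller than $X[0],\dots,X[k-1]$, so must sit at index $\ge k$ in $X$) cleaner than the induction you sketch.
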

\begin{proof}
  Let $e(X,r)$ and $e(Y,r)$ denote the $r$-th largest elements in $X$ and $Y$ respectively. Let $X(r_1,r_2)$ be the set of elements $e(X,r_1)$, $\ldots$, $e(X,r_2)$.  The cost of finding  $e(X,r)$ and $e(Y,r)$ for any $r$ is   $O(\frac{m}{B})$ using linear-time selection. 

  If $e(X,k)=e(Y,k)$, then the $k$ largest elements in $X$ and $Y$ are identical. If $e(X,k)> e(Y,k)$, then at least one among the $k$ largest elements of $X$ does not occur in $Y$. We must  find the smallest $k'$, such that $e(X,k')>e(Y,k')$. Clearly $e(X,k')$ is the required element.   If $e(X,m)=e(Y,m)$, then $k'=m+1$.  Otherwise $e(X,m)> e(Y,m)$ and we proceed as follows. We compare $e(X,m/2)$ and $e(Y,m/2)$.  If $e(X,m/2)=e(Y,m/2)$, we recursively search for $k'$ among elements of $X(m/2, m)$ and $Y(m/2,m)$. If $e(X,m/2)>e(Y,m/2)$, we recursively search for $k'$ among the elements of $X(1, m/2-1)$ and $Y(1,m/2-1)$.  The total runtime is $O(m/B)$ I/Os. 
\end{proof}

Lists $L_1$, $L_2$, $L_3$, and $L_d$ can be constructed in $O(\log_B N)$ I/Os. We can find the largest insertion $x$ in $L_3$ that is not in $L_d$ in $O(|L_3|/B)=O(\log_B N)$: first we extract from $L_d$ all deletions that do not exceed $q$ and then we apply Lemma~\ref{lemma:arrays}. Clearly, $x$ is the largest element that is not deleted and is not larger than $q$. Hence $x$ is the predecessor of $q$ and can be found in $O(\log_B N)$ I/Os. 

Summing up, we can answer predecessor and membership queries in $O(\log_B N)$ I/Os. The cost of answering a reporting query is $O(\log_B N + k/B)$ where $k$ is the number of reported values.  For reporting queries (but not for membership and predecessor queries), we need to make an assumption that the internal memory can hold at least $\log_B N$ blocks. However  the same assumption is also necessary in the case of standard $B^{\eps}$-trees with amortized updates.

\subsection{Rebuilding} \label{sec:rebuild}
In our description we assumed that the values of  $\log_B N$ and $\tau$ are fixed.  However, if the size of the data structure changes significantly (e.g., the number of updates in the data structure changes from $N_0$ to below $\sqrt{N_0}$), the value of $\tau$ can change by more than a constant factor. Hence the invariant of leaf size, that is the $\Theta(B\log^2_BN)$ bound on leaf size can be violated.

In order to maintain the invariant of leaf size, we must re-build the tree in certain situations. Let $N_0$ denote the number of updates in the tree when it was re-built for the last time. If $N\ge (1/2)N_0^2$ or $N\le (3/2)\sqrt{N_0}$, we set $\tau'=B\log_B^2N$ and construct a new tree, such that all leaves hold  between $(1/4)\tau'$  and $(7/4)\tau'$ updates. The new tree $T^{\new}$ can be constructed in the background as explained below.

Let $T^{\old}$ denote the currently used tree and let $T^{\new}$ denote the new constructed tree. 
We traverse the updates stored in the leaf nodes of $T^{\old}$ in left-to-right order. Updates stored in each leaf are examined in increasing order (this is easy to do because the updates in micro-leaves and in the root node buffer of $T({\ell})$ are sorted).  Let $\ell$ denote the currently visited leaf node and let $L$ denote the list of insertions stored in the ancestors of $\ell$.   During each round we examine the next $2B^{1-\delta}$ insertions stored in $\ell$. We add these insertions to $L$ and extract the set $S$ of $B^{1-\delta}$ smallest insertions. Next, we visit all ancestors of $\ell$ and  remove from $S$ all deletions (that is, if $S$ contains an insertion $e$ and an ancestor of $\ell$ stores a deletion $e$, then $e$ is removed from $S$). We also remove all deletions stored in the root buffer of $T(\ell)$. All insertions that remain in $S$ are added  into the new tree $T^{\new}$. When all insertions in a leaf $\ell$ are examined, we move to the right neighbor of $\ell$ and update the list $L$ accordingly. 

When new updates are  added to $T^{\old}$, we also add them to $T^{\new}$. The only exception are deletions that correspond to unprocessed insertions:  Let $v_{\max}$ denote the key value of the largest insertion from $T^{\old}$ that was added to $T^{\new}$. All deletions with key value $v\le v_{\max}$  are added to $T^{\new}$; deletions with key value $v> v_{\max}$ are not added to $T^{\new}$.  When all insertions in the leaves of $T^{\old}$ are examined and copied to $T^{\new}$, we discard $T^{\old}$ and start using $T^{\new}$ to answer queries.

\emph{\textbf{Remark}}: To simplify the description, we set $\delta=\eps/2$ in our data structure. Hence the height of our tree is two times larger than the height of the standard  $B^{\eps}$-tree.  It is possible to use any constant
$\delta<\eps$ such that $B^{\delta}\log B< B^{\eps}$. Thus it is possible to reduce the height ratio of our tree and the standard $B^{\eps}$-tree to any constant $\rho>1$.

\bibliography{worst-case-EM}


\end{document}